\title{}\date{}
\title{Existence of Optical Vortices}
\author{Yisong Yang\\Department of Mathematics\\Polytechnic Institute of New York University\\Brooklyn, New York 11201, USA\\\\Ruifeng Zhang\\Institute of Contemporary Mathematics\\School of Mathematics\\Henan University\\
Kaifeng, Henan 475004, PR China}
\newcommand{\bfR}{{\Bbb R}}
\newcommand{\bfZ}{{\Bbb Z}}
\def\XXint#1#2#3{{\setbox0=\hbox{$#1{#2#3}{\int}$}
 \vcenter{\hbox{$#2#3$}}\kern-.5\wd0}}
\newtheorem{oldtheorem}{Theorem}[section]
\newtheorem{oldassertion}[oldtheorem]{Assertion}
\newtheorem{oldproposition}[oldtheorem]{Proposition}
\newtheorem{oldremark}[oldtheorem]{Remark}
\newtheorem{oldlemma}[oldtheorem]{Lemma}
\newtheorem{olddefinition}[oldtheorem]{Definition}
\newtheorem{oldclaim}[oldtheorem]{Claim}
\newtheorem{oldcorollary}[oldtheorem]{Corollary}
\newenvironment{theorem}{\begin{oldtheorem}$\!\!\!${\bf.}}{\end{oldtheorem}}
\newenvironment{lemma}{\begin{oldlemma}$\!\!\!${\bf.}}{\end{oldlemma}}
\newbox\qedbox
\newenvironment{proof}{\smallskip\noindent{\bf Proof.}\hskip \labelsep}%
                        {\hfill\penalty10000\copy\qedbox\par\medskip}
\newcommand{\dd}{\mbox{d}}
\newcommand{\ee}{\end{equation}}
\newcommand{\be}{\begin{equation}}\newcommand{\bea}{\begin{eqnarray}}
\newcommand{\eea}{\end{eqnarray}}
\newcommand{\ii}{\mbox{i}}\newcommand{\e}{\mbox{e}}
\newcommand{\pa}{\partial}
\newcommand{\vep}{\varepsilon}
\newcommand{\nn}{\nonumber}
\newcommand{\lm}{\lambda}
\begin{document}
\maketitle
\begin{abstract}
Optical vortices arise as phase singularities of the light fields and are of central interest in modern optical
physics. In this paper, some existence theorems are established for stationary vortex wave solutions of a
general class of nonlinear Schr\"{o}dinger equations. There are two types of results. The first type concerns
the existence of positive-radial-profile solutions which are obtained through a constrained minimization approach.
The second type addresses the existence of saddle-point solutions through a mountain-pass-theorem or min-max method
so that the wave propagation constant may be arbitrarily prescribed in an open interval.
Furthermore some explicit estimates for the lower bound and sign
of the wave propagation constant
with respect to the light beam power and vortex winding number are also derived for the first type solutions.
\end{abstract}

\maketitle

\section{Introduction}
\setcounter{equation}{0}

Vortices have important applications in many areas
of modern physics including condensed matter systems, particle interactions, cosmology, and superfluids. 
Research on vortices in optics also has a long history and was initiated in as early as 1964 by Chiao, Garmire, and  Townes \cite{Chiao}
who explored some conditions under which
a light beam can produce its own waveguide and propagate without spreading.
They described such phenomenon as self-trapping, attributed it to light propagation in materials whose dielectric
coefficient increases with field intensity
in the context of high-intensity light beams such as
lasers, predicted 
marked optical and physical effects,
 and suggested the occurrence of optical vortices. Such vortices have since then
been
 observed in numerous studies \cite{AB,BKK,RSS,SO,SL} and become
 a much pursued subject in optics \cite{CG,DK,Du,LD,NNK,SGV,VB} both theoretically and experimentally. (For a 
vast literature
up to 2005 and for a description of the rich features and profound applications of optical vortices, see the nice survey article by Desyatnikov, Kivshar, and  Torner \cite{DKT}. See also \cite{KMT} for a more recent survey of the subject
in a broader perspective.) As waves, light propagation may be described by a wave function.
At certain spots of space, the intensity of the waves vanishes and the phase of the waves
cannot be defined. Thus, such spots are phase singularities which were recognized in the comprehensive work of
Nye and Berry \cite{NB} as crucial characteristics of general wave motions in which vortices are present. These phase
singularities, also referred to as dislocations or defects of waves, are the centers of vortices, around which 
energy
concentrates. In the context of light waves, vortices are centered around vortexlines and light waves are twisted around
the vortexlines. The twisting arises from the phase ambiguity around a vortexline and is of a topological nature.
The twisting centers are exactly the vortex cores where light waves cancel out leading to darkness so that light intensity
measured at any cross section vertical to a light beam axis should display concentric ring-like patterns
around the dark core. Light beams of such structure are also vividly termed ``helical beams" \cite{BSV}.
In optics research,
a fundamental prototype situation is when the light waves are described by a complex-valued wave function 
governed by nonlinear Schr\"{o}dinger 
equations \cite{A,DY,KVT,K,MSZ,NAO,RLS,SK}. These theoretical studies provide a broad range of 
interesting analytic
problems related to the existence and properties of optical vortices for mathematical investigation. 

Our aim in the present
work is to obtain some existence theorems for the optical vortex solutions explored by Salgueiro and Kivshar in
\cite{SK} through a study of the normalized
nonlinear Schr\"{o}dinger equation 
\be \label{1.1}
\ii\frac{\pa\psi}{\pa z}+\nabla^2_\perp \psi+(V+s|\psi|^2)\psi=0,
\ee
where  $\psi$ is a complex-valued optical field propagating in the (longitudinal) $z$-direction, 
$\nabla^2_\perp$ is the Laplace operator over the (transverse) plane of coordinates $(x,y)$ which is
perpendicular to the $z$-axis, $V$ is an external potential function (cf. \cite{K,SMF}),
and $s=\pm1$ is the sign symbol indicating either a focusing or defocusing situation \cite{KVT} which is taken to be $+1$ (focusing) in \cite{SK} and will be our main focus
(the defocusing case $s=-1$ will be seen to be straightforward). The interest of (\ref{1.1})
actually goes beyond nonlinear optics. For example, it also arises in the study of the Bose--Einstein condensates
\cite{A,Du,KK,SMF} and is referred to as the Gross--Pitaevskii equation.
An important simplified situation that allows optical vortices to present is when $V$
depends on the radial variable only, $V=V(r), r=\sqrt{x^2+y^2}$. In this situation one may expect to find
an $n$-vortex
solution of (\ref{1.1}) of the form \cite{SK}
\be \label{1.2}
\psi=\psi(r,\theta,z)=u(r)\e^{\ii(n\theta+\beta z)},
\ee
where $r,\theta$ are polar coordinates over $\bfR^2$, $u(r)$
is the radial profile function which gives rise to the intensity of light waves, integer $n\in\bfZ$ is 
the winding number or vortex charge of the vortex solution, and $\beta\in\bfR$ is the wave propagation constant
\cite{SK}. This ansatz describes a vortex wave centered around the $z$-axis where $r=0$ and propagating along the
$z$-axis.
Inserting (\ref{1.2}) into (\ref{1.1}), we arrive at the following $n$-vortex equation
\be\label{1.3}
(ru_r)_r-\left(\frac{n^2}r+\beta r\right)u+r(V+su^2)u=0,
\ee
of cubic nonlinearity. The presence of the vortex core at $r=0$ requires $u(0)=0$. As
in \cite{SK} (for $n=1$), we are interested in ring-shaped
vortices so that light intensity concentrates around the vortex core which suggests that $u(r)$ may be assumed to
vanish at a sufficiently large distance. Mathematically, this indicates that we may impose another `boundary' condition,
say $u(R)=0$, at a certain distance $R>0$ away from the core of the
vortex as seen in the numerical results of the work \cite{SK}.
Thus, the problem of the existence of optical vortices is reduced into a two-point boundary value problem
with undetermined parameter $\beta$ and prescribed $R$, for any given $n\in\bfZ$. To tackle this problem, we shall use
the methods of calculus of variations. Our methods allow us to obtain two types of results for the nontrivial focusing
case
$s=+1$. The first type of results
rely on a constrained minimization approach. The nature of minimization leads us to obtaining positive-valued solutions in
the open interval $(0,R)$ and that the propagation constant $\beta$ arises
as a Lagrange multiplier due to the constraint. The second type of results are obtained from searching for saddle points of the action 
functional associated to the problem. We will see that, in this latter case, there is no assurance that the solutions must 
stay positive-valued but the propagation constant $\beta$ arises as a prescribed quantity. 

In the next two sections, we shall concentrate on the focusing case when $s=+1$. In Section 2, we formulate the problem of existence of optical vortices as a constrained minimization problem,
state the main existence results regarding positive solutions, and then present the proofs.
We will see that the propagation constant $\beta$ arises as a Lagrange multiplier which is ensured to be negative
when  the vortex charge $n$ is sufficiently large. We will also derive some lower estimate
for $\beta$. In Section 3, we treat $\beta$
as a prescribed quantity and use a mountain-pass theorem approach to establish the existence of solutions for any $R$
and vortex charge $n$. In particular, we show that the propagation constant $\beta$ may assume any prescribed value
in an explicitly given interval.
In Section 4, we briefly discuss the defocusing case when $s=-1$.

\section{Vortices via constrained minimization}
\setcounter{equation}{0}

As described in the previous section, we will be interested in `ring vortices' such that (\ref{1.3}) is considered over a bounded interval $(0,R)$
($R>0$) so that $u$ vanishes at the two endpoints of the interval. As mentioned earlier,
we will mostly concentrate on the nontrivial case, $s=+1$. Thus, our problem is a
two-point boundary value problem
\bea 
(ru_r)_r-\frac{n^2}r u+r(V+u^2)u&=&\beta ru,\label{2.1}\\
u(0)&=& u(R)=0,\label{2.2}
\eea
for which the parameter $\beta\in\bfR$ arises as an eigenvalue of the problem.

In order to approach the problem consisting of (\ref{2.1}) and (\ref{2.2}), we write down the action functional
\be \label{2.3}
I(u)=\frac12\int_0^R\left\{ ru^2_r+\frac{n^2}r u^2-rV(r) u^2-\frac r2u^4\right\}\,\dd r,
\ee
and the constraint functional
\be\label{2.4}
P(u)=\int|\psi|^2 r\,\dd r\dd\theta=2\pi\int_0^R ru^2\,\dd r,
\ee
which measures the beam power \cite{SK} of the vortex wave. Thus, to get a solution of (\ref{2.1})--(\ref{2.2}), it
suffices to prove the existence of a solution to the following constrained minimization problem
\be \label{min}
\min\left\{ I(u)\,|\,u\in {\cal A}, P(u)=P_0\right\},\quad P_0>0,
\ee 
where the admissible class $\cal A$ is defined by
\be 
{\cal A}=\left\{u(r)\mbox{ is absolutely continuous over }[0,R], \, u(0)=u(R)=0,\, E(u)<\infty\right\},
\ee
with
\be 
E(u)=\frac12\int_0^R\left\{ ru^2_r+\frac{1}r u^2+\frac r2u^4\right\}\,\dd r, 
\ee
being the `energy' functional, $P_0$ is a prescribed value for the beam power, and $\beta$ arises as the Lagrange 
multiplier. Note that the finite-energy condition $E(u)<\infty$ is imposed only to ensure that all the terms in
the indefinite action functional (\ref{2.3}) stay finite.

For convenience, for a function $f$ of the variable $r$, we interchangeably use $f_r$ and $f'$ to denote its derivative.
We will also need the following decomposition and notation
\be \label{V}
\left.\begin{array}{rll}V&=&V^+-V^-,\quad V^\pm=\max\{\pm V,0\}, \\
&&\\
V_0^\pm&=&\max\{V^\pm(r)\,|\, r\in[0,R]\},\\
&&\\
V_0&=&\max\{|V(r)|\,|\, r\in[0,R]\}=\max\{V^+_0,V^-_0\}.\end{array}\right\}
\ee

The main results of this section may be stated as follows.

\begin{theorem}\label{theorem1} For any nonzero integer $n$ and a given continuous potential function $V(r)$ over
the interval $[0,R]$ ($R>0$), consider 
the two-point boundary value problem (\ref{2.1})--(\ref{2.2}) governing an $n$-vortex
wave solution of the nonlinear Schr\"{o}dinger equation (\ref{1.1}), propagating along the $z$-axis with a propagation constant
$\beta$.

(i) The problem always has a solution pair $(u,\beta)$ with $u(r)>0$, $r\in(0,R)$, and $\beta\in\bfR$, so that the
associated beam power enjoys
the bound $P(u)<4\pi |n|$. In fact, such a solution may be obtained through solving the constrained minimization problem
(\ref{min}) assuming $P_0<4\pi|n|$, from which $\beta$ arises as a Lagrange multiplier.

(ii) Let $(u,\beta)$ be the solution pair of the problem (\ref{2.1})--(\ref{2.2}) obtained in part (i). Then $\beta$
has the lower bound
\be \label{Pii}
\beta\geq\frac{7P_0}{5\pi R^2}-V_0^- -\frac{12}{R^2}(1+n^2[2\ln2-1]).
\ee

(iii) Let $(u,\beta)$ be the solution pair of the problem (\ref{2.1})--(\ref{2.2}) obtained in part (i). Then $\beta<0$
if $|n|$ is sufficiently large so that
\be \label{Piii}
|n|>\left\{\frac{P_0^2}{4\pi^2}+\max\{r^2 V^+(r)\,|\,r\in[0,R]\}\right\}^{\frac12}.
\ee

(iv) The problem (\ref{2.1})--(\ref{2.2}) has no nontrivial small-beam-power solution satisfying $P(u)\leq\frac12$ if
the condition
\be 
n^2>r^2(V^+(r)-\beta),\quad r\in[0,R],
\ee
is fulfilled. So, roughly speaking,  the problem has no nontrivial small-power $P$ and small-propagation-constant (i.e., $|\beta|$
is sufficiently small) solution over
a small interval $[0,R]$.
\end{theorem}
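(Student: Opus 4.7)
The plan is to base all four parts on a single workhorse interpolation. For any absolutely continuous $u$ on $[0,R]$ with $u(0)=0$, the identity $u^2(r)=2\int_0^r(\sqrt{s}\,u_s)(u/\sqrt{s})\,ds$ together with Cauchy--Schwarz gives $\sup_{[0,R]}u^2\leq 2\sqrt{AB}$, where $A:=\int_0^R r u_r^2\,dr$ and $B:=\int_0^R u^2/r\,dr$, so that with $P(u)=2\pi\int_0^R r u^2\,dr$,
\[
\int_0^R r u^4\,dr\ \leq\ \frac{P(u)}{\pi}\sqrt{AB}\ \leq\ \frac{P(u)}{2\pi}\!\left(\frac{A}{|n|}+|n|B\right)
\]
by weighted AM--GM. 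The second bound dominates the quartic term by $\frac{1}{2}(A+n^2 B)$ precisely when $P(u)<4\pi|n|$, which is the linchpin of the whole minimization.

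For part (i), the bound above makes $I$ coercive in $E(u)$ and bounded below on $\{P(u)=P_0\}$ whenever $P_0<4\pi|n|$. I would apply the direct method: take a minimizing sequence, extract a weakly convergent subsequence in the weighted Sobolev space encoded by $E$, use lower semicontinuity of $A$ and $B$ together with compact embedding on any $[\varepsilon,R-\varepsilon]$ (with the singular $1/r$ weight providing uniform tail control near $r=0$) for the remaining terms, and conclude that the minimum is attained. The Euler--Lagrange equation is exactly (\ref{2.1}) with $\beta$ the Lagrange multiplier. Replacing a minimizer $u$ by $|u|$ leaves $I$ and $P$ unchanged, so $u\geq 0$ may be assumed, and the strong maximum principle on compact subintervals of $(0,R)$ upgrades this to $u>0$.

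For parts (ii) and (iii), multiplying (\ref{2.1}) by $u$ and integrating by parts gives the master identity
\[
\beta\,\frac{P_0}{2\pi}=-A-n^2 B+\int_0^R rVu^2\,dr+\int_0^R r u^4\,dr.
\]
For (ii), the estimate $\int rVu^2\geq-V_0^-P_0/(2\pi)$ is immediate, and an upper bound on $A+n^2 B$ follows by comparing $I(u)\leq I(v)$ against an explicit trial function $v$ (a normalized $r^{|n|}(R-r)$ respecting the vortex-core behaviour $u\sim r^{|n|}$ is natural), closed using the workhorse bound on $\int r u^4$; the constants $7/5$, $12$, and $2\ln 2-1$ in (\ref{Pii}) will emerge as closed-form integrals of $v$ against the weights appearing in $I$. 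For (iii), I instead bound $\int rVu^2\leq(\max r^2V^+)B$ and control $\int r u^4$ via the workhorse bound with a free AM--GM parameter $\lambda$: the quantity $A+(n^2-\max r^2V^+)B-\int r u^4$ is nonnegative whenever $\lambda<2\pi/P_0$ and $n^2>\max r^2V^++P_0/(2\pi\lambda)$, and letting $\lambda\uparrow 2\pi/P_0$ recovers precisely (\ref{Piii}).

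For (iv), the master identity rearranges to $\int r u^4=A+\int(n^2/r+\beta r-rV)u^2\,dr$, whose right-hand side is a sum of nonnegative contributions under the hypothesis (since $rV^-\geq 0$ and $n^2/r+\beta r-rV^+\geq 0$). Combining the lower bound $\int r u^4\geq A+\mu B\geq 2\sqrt{\mu AB}$ (with $\mu:=\min_{[0,R]}(n^2-r^2(V^+-\beta))>0$) against the workhorse upper bound $\int r u^4\leq(P/\pi)\sqrt{AB}$ forces $P\geq 2\pi\sqrt{\mu}$ for any nontrivial $u$, so the small-power hypothesis rules out nontrivial solutions. The hardest step throughout will be (i): the constant $4\pi|n|$ in the workhorse inequality is exactly the threshold separating the controllable from the uncontrollable regime for the cubic nonlinearity, and verifying weak lower semicontinuity together with the correct behaviour at $r=0$ in the weighted Sobolev setting is where the analysis demands the most care.
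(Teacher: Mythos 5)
Your parts (i) and (iii) follow essentially the paper's own route. Writing $A=\int_0^R ru_r^2\,\dd r$ and $B=\int_0^R u^2/r\,\dd r$, your workhorse bound $\int_0^R ru^4\,\dd r\le \frac{P(u)}{\pi}\sqrt{AB}$ is exactly the paper's (\ref{2.13}) before the Young/$\vep$ step, the threshold $P_0<4\pi|n|$ arises in the same way, and the direct method (weak compactness, Fatou for $A$ and $B$, tail control at $r=0$ from the $1/r$ weight, replacement of $u$ by $|u|$) is the paper's argument; your optimal choice of the AM--GM parameter in (iii) reproduces the paper's choice $\vep=1$ in (\ref{2.35}). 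These two parts are sound.

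Parts (ii) and (iv), however, would not prove the inequalities actually asserted. For (ii), the constants in (\ref{Pii}) are not generic: the term $n^2(2\ln2-1)$ comes from $\int_0^{2a}u_0^2\,\dd r/r=2b^2(2\ln2-1)$ for the specific piecewise-linear tent function peaked at $r=R/2$ defined in (\ref{2.28}), and the coefficients $7/(5\pi)$ and $12$ come from the explicit moments (\ref{2.28a})--(\ref{2.28d}) of that same function combined with $(\int ru^2\,\dd r)^2\le\frac{R^2}{2}\int ru^4\,\dd r$. Your trial function $r^{|n|}(R-r)$ is admissible and the comparison $I(u)\le I(v)$ is the right idea, but it produces no logarithms and a different $n$-dependence, so the bound you would obtain is not (\ref{Pii}); the claim that those constants ``will emerge'' from your $v$ is false. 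For (iv), combining $\int ru^4\,\dd r\ge A+\mu B\ge2\sqrt{\mu AB}$ with the workhorse upper bound yields nonexistence only for $P<2\pi\sqrt{\mu}$, where $\mu=\min_{[0,R]}\bigl(n^2-r^2(V^+-\beta)\bigr)$. Under the hypothesis of (iv) $\mu$ may be arbitrarily small (take $V^+\equiv0$ and $\beta$ slightly above $-n^2/R^2$), so $2\pi\sqrt{\mu}$ can fall below $\tfrac12$ and your argument leaves the range $2\pi\sqrt{\mu}\le P\le\tfrac12$ uncovered. The paper instead invokes the planar Gagliardo--Nirenberg (Ladyzhenskaya) inequality $\int_0^R ru^4\,\dd r\le4\pi\int_0^R ru^2\,\dd r\int_0^R ru_r^2\,\dd r=2P_0A$, which absorbs the quartic term into $A$ alone and thereby decouples the smallness condition $P_0\le\tfrac12$ from the sign condition on the coefficient of $B$; your radial interpolation necessarily splits the quartic between $A$ and $B$ and cannot achieve this decoupling. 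You would need to import that two-dimensional inequality (or an equivalent) to recover the stated threshold.
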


We now establish these results.

(i) For any function $u$ satisfying $u(0)=0$, the Schwartz inequality implies that
\be \label{2.12}
u^2(r)=\int^r_0 2u(\rho)u_\rho(\rho)\,\dd\rho\leq 2\left(\int^r_0\rho u_\rho^2(\rho)\,\dd\rho\right)^{\frac12}
\left(\int_0^r\frac{u^2(\rho)}\rho\,\dd\rho\right)^{\frac12}.
\ee
Thus, multiplying (\ref{2.12}) by $r u^2$, integrating, and using $P(u)=P_0$, we have
\bea \label{2.13}
\int_0^R ru^4\,\dd r &\leq& \frac1\pi P_0\left(\int^R_0\rho u_\rho^2(\rho)\,\dd\rho\right)^{\frac12}
\left(\int_0^R\frac{u^2(\rho)}\rho\,\dd\rho\right)^{\frac12}\nn\\
&\leq&\vep \int^R_0\rho u_\rho^2(\rho)\,\dd\rho+\frac1\vep\left(\frac{P_0}{2\pi}\right)^2 \int_0^R\frac{u^2(\rho)}\rho\,\dd\rho.
\eea
Inserting (\ref{2.13}) into (\ref{2.3}), we obtain
\be \label{2.14}
I(u)\geq\frac12\left(1-\frac\vep2\right)\int_0^R r u_r^2\,\dd r+\frac12\left(n^2-\frac{P_0^2}{8\pi^2\vep}\right)\int_0^R
\frac{u^2}r\,\dd r-\frac1{4\pi} P_0 V_0.
\ee
In order to be able to find a suitable $\vep>0$ such that in (\ref{2.14}) we have
\be 
1-\frac\vep2>0,\quad n^2-\frac{P_0^2}{8\pi^2\vep}>0,
\ee
simultaneously, it suffices to assume that $P_0$ satisfies the condition
\be \label{2.16}
P_0<4\pi |n|.
\ee
In this situation, we can find two positive constants $C_1, C_2$, depending on $\vep, n, P_0$ but
independent of $u$, such that
\be\label{low}
I(u)\geq C_1 \int_0^R r u_r^2\,\dd r+C_2\int_0^R
\frac{u^2}r\,\dd r-\frac1{4\pi} P_0 V_0.
\ee

Assume  (\ref{2.16}) and let $\{u_m\}$ be a minimizing sequence of (\ref{min}). Then the coercive inequality (\ref{low}) 
gives us the bound
\be \label{bd}
\int_0^R r([u_m]_r)^2\,\dd r+\int_0^R \frac1r u_m^2\,\dd r\leq C,
\ee
where $C>0$ is a constant independent of $m$. 

 Since both functionals $I$ and $P$ are even, we have $I(u_m)\geq I(|u_m|)$ and
$P(u_m)=P(|u_m|)$, where we have also used the basic fact \cite{GT} that for any function $u$ its distributional
derivative must satisfy $||u|_r|\leq |u_r|$. In other words, we see that the sequence $\{u_m\}$ may be modified so that each $u_m$ is nonnegative,
$u_m\geq0$. Thus we may assume that the sequence $\{u_m\}$ consists of nonnegative-valued functions.
It is clear that these functions may be viewed as radially symmetric functions over the disk
$D_R=\{(x,y)\in\bfR^2\,|\, x^2+y^2\leq R^2\}$ which all vanish on $\pa D_R$.
Moreover, since (\ref{bd}) holds, we see immediately that $\{u_m\}$ is bounded under the 
radially symmetrically reduced norm $\|\,\|$ where
\be \label{H}
\|u\|^2=\int_0^R r u^2\,\dd r+\int_0^R r u_r^2\,\dd r,
\ee
for the standard Sobolev space $W^{1,2}_0(D_R)$ since 
\be \label{R22}
\int_0^R ru^2\,\dd r\leq R^2\int_0^R\frac1r u^2\,\dd r.
\ee
Hence we may assume without loss of generality that $\{u_m\}$ converges weakly to an element $u\in W^{1,2}_0(D_R)$
as $m\to\infty$. Applying the compact embedding $W^{1,2}(D_R)\to L^p(D_R)$ ($p\geq1$), 
we see that $u_m\to u$ strongly in $L^p(D_R)$ as
$m\to\infty$. Of course, $u$ is radially symmetric as well. Thus we may write it as $u=u(r)$ which satisfies $u(R)=0$.
Moreover, it is clear
that for any $\vep\in(0,R)$, $\{u_m\}$ is a bounded sequence in the space $W^{1,2}(\vep,R)$. Thus, using
the compact embedding $W^{1,2}(\vep,R)\to C[\vep,R]$, we see that $u_m\to u$ as $m\to\infty$ uniformly over
$[\vep, R]$. Besides, similar to (\ref{2.12}), we have for any pair $r_1,r_2\in (0,R)$, $r_1<r_2$, the inequality
\bea \label{19}
|u_m^2(r_2)-u_m^2(r_1)|&\leq &2\left(\int^{r_2}_{r_1}r ([u_m]_r)^2\,\dd r\right)^{\frac12}
\left(\int_{r_1}^{r_2}\frac{u_m^2}r\,\dd r\right)^{\frac12}\nn\\
&\leq&2C^{\frac12}
\left(\int_{r_1}^{r_2}\frac{u_m^2}r\,\dd r\right)^{\frac12},
\eea
where the constant $C>0$ is as given in (\ref{bd}). Letting $m\to\infty$ in (\ref{19}), we arrive at
\be \label{20}
|u^2(r_2)-u^2(r_1)|\leq 2C^{\frac12}\left(\int_{r_1}^{r_2}\frac{u^2}r\,\dd r\right)^{\frac12}.
\ee
However,
in view of (\ref{bd}) and Fatou's lemma, we have
\bea 
\int_0^R r u_r^2\,\dd r&\leq&\liminf_{m\to\infty}\int_0^R r([u_m]_r)^2\,\dd r,\\
\int_0^R \frac1r u^2\,\dd r&\leq&\liminf_{m\to\infty}\int_0^R \frac1ru_m^2\,\dd r.
\eea
In particular, in view of (\ref{bd}) again, we see that $\frac1r u^2\in L(0,R)$. Therefore the right-hand side of
(\ref{20}) tends to zero as $r_1,r_2\to 0$, which implies that the limit
\be 
\eta_0=\lim_{r\to0} u^2(r)
\ee
exists. Since $\frac1r u^2\in L(0,R)$, we must have $\eta_0=0$. Hence the boundary condition $u(0)=0$ is achieved.

Summarizing the above results, we conclude that the function $u$ obtained as the limit
of the minimizing sequence $\{u_m\}$ for the problem (\ref{min}) satisfies 
$u(0)=u(R)=0$, $u(r)\geq0$ for all $r\in[0,R]$, $E(u)<\infty$, and
\be 
I(u)\leq\liminf_{m\to\infty}I(u_m),\quad P(u)=\lim_{m\to\infty}P(u_m)=P_0.
\ee
Thus, $u$ is a solution to (\ref{min}). Consequently, there is some $\beta\in\bfR$ such that $(u,\beta)$
verify the boundary value problem (\ref{2.1})--(\ref{2.2}).

If there is a point $r_0\in (0,R)$ such that $u(r_0)=0$, then $u_r(r_0)=0$ since $r_0$ is a minimum point for
the function $u(r)$. Applying the uniqueness theorem for the initial value problem of ordinary differential equations,
we have $u(r)=0$ for all $r\in(0,R)$, which contradicts the fact $P(u)=P_0>0$. This proves $u(r)>0$ for all $r\in (0,R)$.

(ii) Let $(u,\beta)$ be a solution pair just obtained. We next study the quantity $\beta$ in (\ref{2.1}). 
As a preparation,
 we establish
\be \label{2.21}
\liminf_{r\to 0}\{ ru(r)|u_r(r)|\}=0.
\ee

Suppose otherwise that (\ref{2.21}) is not valid. Then there are some $\vep_0>0$ and
$r_0\in(0,R]$ such that
\be 
 ru(r)|u_r(r)|\geq \vep_0,\quad r\in(0,r_0),
\ee
which leads to
\be 
\infty=\int_0^{r_0}\frac{\vep_0}r\,\dd r\leq\int_0^{r_0} u|u_r|\,\dd r\leq \left(\int_0^{r_0}\frac1r u^2\,\dd r\right)^{\frac12}
\left(\int_0^{r_0}ru_r^2\,\dd r\right)^{\frac12},
\ee
which contradicts with $E(u)<\infty$. So (\ref{2.21}) is valid.

From (\ref{2.21}), we can find a sequence $\{r_m\}$ so that $r_m\to0$ as $m\to\infty$ and
\be \label{2.24}
\lim_{m\to\infty} \{r_m u(r_m) u_r(r_m)\}=0.
\ee

Multiplying (\ref{2.1}) by $u$, integrating over $[r_m,R]$, letting $m\to\infty$, and applying (\ref{2.24}), we obtain
\be \label{2.25}
\beta\int_0^R ru^2\,\dd r=\int_0^R(rVu^2+ru^4)\,\dd r-\int_0^R\left(\frac{n^2}r u^2+ru_r^2\right)\dd r.
\ee

Let $u_0$ be any absolutely continuous function satisfying $E(u_0)<\infty$, the boundary condition (\ref{2.2}), and
$P(u_0)=P_0$. Since $u$ solves (\ref{min}), we have $I(u)\leq I(u_0)$. As a consequence of this observation, we have
the bound
\be \label{2.26}
\int_0^R\left(\frac{n^2}r u^2+ru_r^2\right)\dd r\leq \int_0^R\left(rVu^2+\frac  r2 u^4\right)\dd r+2I(u_0).
\ee

Inserting (\ref{2.26}) into (\ref{2.25}), we obtain
\be \label{2.27}
\frac1{2\pi}\beta {P_0}\geq -2I(u_0)+\frac12\int_0^R ru^4\,\dd r.
\ee

To estimate the right-hand side of (\ref{2.27}), we set $R=2a$ for convenience and define
\be \label{2.28}
u_0(r)=\left\{\begin{array}{lrl}\frac bar,&& 0\leq r\leq a,\\ && \\\frac ba(2a-r),&&a\leq r\leq 2a.\end{array}\right.
\ee 
Therefore, after some calculation we have
\bea
P_0=2\pi\int_0^{2a} r u_0^2\,\dd r&=& \frac{4\pi}3a^2b^2,\label{2.28a}\\
\int_0^{2a} r (u_0'(r))^2\,\dd r&=&2b^2,\label{2.28b}\\
\int_0^{2a}\frac1r u_0^2\,\dd r&=&2b^2(2\ln2-1),\label{2.28c}\\
\int_0^{2a} r u^4_0\,\dd r&=&\frac25 a^2 b^4.\label{2.28d}
\eea

Using (\ref{2.28a})--(\ref{2.28d}) and (\ref{V}), we get
\bea 
I(u_0)&\leq&\frac12\int_0^R\left\{ r(u_0'(r))^2+\frac{n^2}r u_0^2-\frac r2u_0^4\right\}\,\dd r+\frac12\int_0^R r V^- u^2\,\dd r\nn\\
&\leq&b^2\left(1+n^2(2\ln2-1)+\frac13 V_0^- a^2-\frac1{10}a^2b^2\right).\label{2.33}
\eea
Besides, applying the Schwartz inequality, we have
\be \label{P2}
\left(\int_0^R ru^2\,\dd r\right)^2\leq\frac{R^2}2\int_0^R ru^4\,\dd r.
\ee
Thus, in view of (\ref{2.27}),  (\ref{2.33}),  and (\ref{P2}),  we arrive at
\be 
\frac1{2\pi}\beta P_0\geq2b^2\left(\frac{7}{45}a^2b^2-\left[\frac1{12}V_0^- R^2+1+n^2(2\ln2-1)\right]\right).
\ee
Inserting
$R=2a$ and $a^2b^2=3P_0/4\pi$,
we obtain the lower estimate for $\beta$:
\be \label{P3}
\beta\geq \frac{12}{R^2}\left(\frac{7}{60\pi}P_0-\left[\frac1{12}V_0^- R^2+1+n^2(2\ln2-1)\right]\right).
\ee

(iii) We next derive the sufficient condition stated to ensure $\beta<0$.

In fact, inserting (\ref{2.13}) into (\ref{2.25}), we have
\be \label{2.35}
\frac1{2\pi}\beta P_0\leq-(1-\vep)\int_0^R r u^2_r\,\dd r-\int_0^R\left(\frac{n^2}r-\frac{P_0^2}{4\pi^2\vep r}-rV\right) u^2\,\dd r.
\ee
For convenience, we may set $\vep=1$ in (\ref{2.35}). Thus, whenever the inequality
\be 
n^2> \frac1{4\pi^2}P_0^2 + r^2 V^+(r),\quad r\in[0,R],
\ee
is fulfilled, we can conclude with $\beta<0$ since $u(r)>0$ for $r\in(0,R)$.

(iv) We now consider nonexistence. For any admissible function $u$, we may view $u$ as a radially symmetric
function defined over $\bfR^2$ with support contained in the disk $D_R=\{(x,y)\in \bfR^2\,|\, x^2+y^2\leq R^2\}$. Hence, applying the classical Gagliardo--Nirenberg inequality over $\bfR^2$, we deduce
\be \label{GN}
\int_0^R ru^4\,\dd r\leq 4\pi\int_0^R ru^2\,\dd r\int_0^R ru_r^2\,\dd r,
\ee

From (\ref{2.25}) and inserting (\ref{GN}) with $P(u)=P_0$, we have
\bea 
0&=&\int_0^R(rVu^2+ru^4)\,\dd r-\int_0^R\left(\frac{n^2}r u^2+ru_r^2+\beta ru^2\right)\dd r\nn\\
&\leq&(2 P_0-1)\int_0^R ru^2_r\,\dd r-\int_0^R\left(\frac{n^2}{r^2}+\beta-V\right)r u^2\,\dd r.
\eea
Therefore, when the conditions
\be 
 P_0\leq \frac12,\quad \frac{n^2}{r^2}-V(r)+\beta>0,\quad r\in(0,R],
\ee
are imposed, we arrive at $u\equiv0$, as anticipated. Thus, in this situation, the problem 
consisting of (\ref{2.1})--(\ref{2.2}) has no nontrivial solution.

The proof of Theorem \ref{theorem1} is complete.

\section{Vortices as saddle points}
\setcounter{equation}{0}

In this section, we study the existence of optical vortices which are the solutions of the
boundary value problem (\ref{2.1})--(\ref{2.2}) as the saddle points of the action functional 
\be \label{2.3b}
I_\beta(u)=\frac12\int_0^R\left\{ ru^2_r+\frac{n^2}r u^2+(\beta-V(r))r u^2-\frac r2u^4\right\}\,\dd r,
\ee
with $|n|\geq1$.
We shall use a min-max theory approach. Suggested by the discussion of the previous section, we introduce the function
space $H$ which is the completion of the space $X=\{u\in C^1[0,R]\,|\, u(0)=u(R)=0\}$ (the set of differentiable 
functions over $[0,R]$ which vanish at the two endpoints of the interval) equipped with the inner product
\be 
\langle u,v\rangle_H=\int_0^R\left(r u_r v_r+\frac1r uv\right)\,\dd r,\quad u,v\in H.
\ee
As seen in the discussion of the previous section, 
as a Hilbert space, $H$ may be viewed as an embedded subspace of $W^{1,2}_0(D_R)$ consisting of radially
symmetric functions such that any element $u\in H$ enjoys the desired property $u(0)=0$.

In order to simplify the presentation of the study here, we assume that $\beta$ satisfies
\be \label{3.2}
\beta\geq V_0^+=\max\{V^+(r)\,|\, r\in[0,R]\}.
\ee

Recall that a $C^1$-functional $I:H\to \bfR$ is said to enjoy the Palais--Smale condition if for any sequence
$\{u_m\}$ satisfying the properties (i) $I(u_m)\to\alpha$ as $m\to\infty$, and (ii) $I'(u_m)\to 0$ as $m\to\infty$ as a sequence
in the dual space of $H$, one can extract a subsequence from $\{u_m\}$  which converges (strongly) in $H$.

As an initial step, we have 

\begin{lemma}\label{lemma3.1} The action functional $I_\beta$ given in (\ref{2.3b}) satisfies the Palais--Smale condition.
\end{lemma}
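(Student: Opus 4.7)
The plan is to follow the standard three-step template for verifying Palais--Smale: derive boundedness of any sequence $\{u_m\}\subset H$ with $I_\beta(u_m)\to\alpha$ and $I'_\beta(u_m)\to 0$ in $H^*$; extract a weakly convergent subsequence $u_m\rightharpoonup u$ and upgrade to strong convergence in $L^4$; and finally promote the weak convergence to strong convergence in $H$ using the identity coming from $I'_\beta(u_m)(u_m-u)$ and the compactness of the cubic term.

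For the boundedness step, the key observation is that, since $|n|\geq 1$ and $\beta\geq V_0^+$ forces $\beta-V(r)\geq 0$ pointwise, the quadratic part
\[
Q(u)=\int_0^R\left\{ru_r^2+\frac{n^2}{r}u^2+(\beta-V(r))r u^2\right\}\dd r
\]
is coercive and dominates the inner-product norm $\|u\|_H^2=\int_0^R(ru_r^2+\frac{1}{r}u^2)\dd r$. A short calculation then gives the standard Pohozaev-type identity $I_\beta(u)-\tfrac14 I'_\beta(u)(u)=\tfrac14 Q(u)$, so that
\[
\|u_m\|_H^2\leq Q(u_m)=4 I_\beta(u_m)-I'_\beta(u_m)(u_m)\leq C+\epsilon_m\|u_m\|_H,
\]
with $\epsilon_m=\|I'_\beta(u_m)\|_{H^*}\to 0$, yielding $\|u_m\|_H\leq C'$.

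Next, after passing to a subsequence, $u_m\rightharpoonup u$ in $H$. Viewing elements of $H$ as radially symmetric functions in $W^{1,2}_0(D_R)$, as was done in Section 2, the Rellich--Kondrachov embedding $W^{1,2}(D_R)\hookrightarrow L^p(D_R)$ is compact for every $p<\infty$, so in particular $u_m\to u$ strongly in $L^4(D_R)$, i.e. $\int_0^R r(u_m-u)^4\dd r\to 0$. Combined with the uniform $L^4$-bound on $\{u_m\}$, this and the pointwise estimate $|u_m^3-u^3|\leq 3(u_m^2+u^2)|u_m-u|$ together with the Cauchy--Schwarz inequality imply that the cubic piece vanishes in the limit:
\[
\int_0^R r(u_m^3-u^3)(u_m-u)\,\dd r\to 0.
\]

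For the final step, I will write
\[
Q(u_m-u)=I'_\beta(u_m)(u_m-u)-I'_\beta(u)(u_m-u)+\int_0^R r(u_m^3-u^3)(u_m-u)\,\dd r.
\]
The first term on the right tends to zero because $I'_\beta(u_m)\to 0$ in $H^*$ and $\{u_m-u\}$ is bounded in $H$; the second term tends to zero because $I'_\beta(u)\in H^*$ and $u_m-u\rightharpoonup 0$; the third term was shown to vanish above. Hence $Q(u_m-u)\to 0$, and by the coercivity bound $\|u_m-u\|_H^2\leq Q(u_m-u)$ one concludes $u_m\to u$ strongly in $H$. I expect the main technical nuisance to be the justification that the weak-$H$ convergence of the radially-reduced sequence yields strong $L^4$ convergence in the weighted one-dimensional sense; this is precisely the point where the identification with radial functions on $D_R$ (already used in Section 2) plays the decisive role, so no new machinery is required.
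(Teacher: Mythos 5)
Your proof is correct and follows essentially the same route as the paper's: boundedness via testing $I'_\beta(u_m)$ against $u_m$ (your identity $I_\beta(u)-\frac14 I'_\beta(u)(u)=\frac14 Q(u)$ is exactly the paper's combination of its inequality (3.5) with the bound $I_\beta(u_m)\leq\alpha+1$), coercivity of the quadratic form from $|n|\geq1$ and $\beta\geq V_0^+$, and strong $H$-convergence by testing against $u_m-u$ and using compactness to kill the cubic term. The only cosmetic difference is that the paper first passes to the limit to show $u$ is a weak solution and then subtracts, whereas you dispose of $I'_\beta(u)(u_m-u)$ directly via weak convergence; both are standard and equivalent here.
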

\begin{proof} 
It is straightforward to see that the functional (\ref{2.3b}) is $C^1$ over $H$.

Let $\{u_m\}$ be a sequence in $H$ satisfying the properties
\bea 
I_\beta(u_m)&=&\frac12\int_0^R\left\{ r([u_m]_r)^2+\frac{n^2}r u_m^2+\left(\beta-V(r)\right)r u_m^2-\frac r2 u_m^4\right\}\,\dd r\nn\\
&\to&\alpha,\quad m\to\infty,\label{3.3}\\
|I_\beta'(u_m)(v)|&=&\left|\int_0^R\left\{ r[u_m]_r v_r+\frac{n^2}r u_m v+(\beta-V(r))r u_m v-ru_m^3 v\right\}\,\dd r\right|\nn\\
&\leq&\vep_m\|v\|_H,\quad \vep_m\geq0,\quad v\in H,\label{3.4}
\eea
where $\vep_m\to0$ as $m\to\infty$. In (\ref{3.4}), we may take $v=u_m$ to get
\be \label{3.5}
\int_0^R ru_m^4\,\dd r\leq\int_0^R\left(r([u_m]_r)^2+\frac{n^2}r u_m^2+(\beta-V(r))ru_m^2\right)\,\dd r+\vep_m\|u_m\|_H.
\ee

On the other hand, in view of (\ref{3.3}), we may assume without loss of generality that $I_\beta(u_m)\leq \alpha+1$ for all $m
=1,2,\cdots$. Hence, applying (\ref{3.5}), the assumption (\ref{3.2}), and a simple interpolation inequality, we find
\bea 
2(\alpha+1)&\geq& \frac12\int_0^R\left(r([u_m]_r)^2+\frac{n^2}r u_m^2+(\beta-V(r)) ru_m^2\right)\,\dd r-\frac12\vep_m\|u_m\|_H\nn\\
&\geq&\frac14\|u_m\|^2_H-\frac14\vep_m^2,\quad m=1,2,\cdots.
\eea
In other words, $\{u_m\}$ is a bounded sequence in $H$. Without loss of generality, we may assume that $\{u_m\}$ weakly
converges to an element $u\in H$ as $m\to\infty$. It is clear that $u_m\to u$ as $m\to\infty$ strongly in any
$L^p(D_R)$ or $L^p((0,R), r\dd r)$ ($p\geq1$). Thus, letting $m\to\infty$ in (\ref{3.4}), we arrive at
\be \label{3.7}
\int_0^R\left\{ r u_r v_r+\frac{n^2}r u v+(\beta-V(r))r u v-r u^3 v\right\}\,\dd r=0,\quad\forall v\in H.
\ee
Let $v=u_m-u$ in (\ref{3.4}) and (\ref{3.7}) and insert the resulting (\ref{3.7}) into
the resulting (\ref{3.4}). We have
\bea 
&&\left|\int_0^R\left\{ r([u_m-u]_r)^2+\frac{n^2}r (u_m-u)^2+(\beta-V)r (u_m-u)^2-r(u_m^3-u^3)(u_m-u)\right\}\,\dd r\right|\nn\\
&&\leq \vep_m\|u_m-u\|_H.\label{3.8}
\eea
As a consequence of (\ref{3.8}) and $\beta\geq V_0^+$, we obtain
\be 
\|u_m-u\|^2_H\leq \vep_m\|u_m-u\|_H +\int_0^R r|(u_m^3-u^3)(u_m-u)|\,\dd r,\quad m=1,2,\cdots,
\ee
which immediately implies that $u_m\to u$ strongly in $H$ as $m\to\infty$, as desired.
\end{proof}

We next identify a mountain-pass structure through the following two lemmas.

\begin{lemma}\label{lemma3.2} There are constants $K>0$ and $C_0>0$ such that
\be \label{3.10}
\inf\{ I_\beta(u)\,|\, \|u\|_H^2=K\}\geq C_0.
\ee
\end{lemma}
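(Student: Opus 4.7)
The plan is to bound $I_\beta$ from below on the sphere $\|u\|_H^2 = K$ by something of the form $\frac{K}{2} - CK^2$, then pick $K$ small. The quadratic part of $I_\beta$ should dominate when $K$ is small, and the key is to control the quartic term by $\|u\|_H^4$.

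First I would observe that the hypothesis $\beta \geq V_0^+$ forces $\beta - V(r) = \beta - V^+(r) + V^-(r) \geq 0$ for every $r\in[0,R]$, so the weighted mass term $(\beta - V(r))ru^2$ is nonnegative and may simply be dropped for a lower bound. Combined with $|n|\geq 1$, which gives $\frac{n^2}{r}u^2 \geq \frac{1}{r}u^2$, the purely quadratic part of $I_\beta$ satisfies
\[
\frac12\int_0^R\Bigl\{ru_r^2 + \frac{n^2}{r}u^2 + (\beta-V(r))ru^2\Bigr\}\,\dd r \geq \frac12\|u\|_H^2.
\]

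Next I need an estimate of the form $\int_0^R r u^4\,\dd r \leq C\|u\|_H^4$. The cleanest route is to reuse the pointwise estimate (\ref{2.12}) together with AM--GM, which yields $u^2(r)\leq \|u\|_H^2$ uniformly in $r\in[0,R]$; then
\[
\int_0^R r u^4\,\dd r \leq \|u\|_H^2 \int_0^R r u^2\,\dd r \leq R^2\|u\|_H^2\int_0^R \frac{u^2}{r}\,\dd r \leq R^2\|u\|_H^4.
\]
Alternatively, the Gagliardo--Nirenberg inequality (\ref{GN}) applied to the radial extension of $u$ on $D_R$ gives the same type of bound with a different explicit constant.

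Combining these two estimates on $\|u\|_H^2 = K$ gives
\[
I_\beta(u) \geq \frac{K}{2} - \frac{C}{4}K^2 = K\Bigl(\frac12 - \frac{CK}{4}\Bigr),
\]
with $C$ depending only on $R$. Choosing, for instance, $K := 1/C$ makes the right-hand side equal to $K/4$, so that $C_0 := K/4 > 0$ works and (\ref{3.10}) follows. I do not expect any serious obstacle here: the argument is just a standard ``mountain at the origin'' computation, and the only nontrivial input beyond the assumption $\beta \geq V_0^+$ and $|n|\geq 1$ is the Sobolev-type control of the $L^4$ weighted norm by $\|\cdot\|_H$, which has already been developed in Section~2.
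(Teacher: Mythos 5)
Your argument is correct and is essentially the paper's own proof: both drop the nonnegative $(\beta-V)ru^2$ term using $\beta\geq V_0^+$, bound $\frac{n^2}{r}u^2$ below by $\frac{1}{r}u^2$, control $\int_0^R ru^4\,\dd r$ by $CR^2\|u\|_H^4$ via the pointwise estimate (\ref{2.12}), and then optimize the resulting lower bound $\tfrac{K}{2}-CK^2$ in $K$. The only difference is cosmetic (you insert an AM--GM step and the constant $C$ differs from the paper's $R^2$), so nothing further is needed.
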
\begin{proof}
For any constant $K>0$, let $u\in H$ satisfy $\|u\|^2_H=K>0$. From (\ref{2.12}), we have
\be \label{3.11}
\int_0^R ru^4\,\dd r \leq 4\int_0^R r\,\dd r\int_0^R ru_r^2\,\dd r\int_0^R\frac{u^2}r\,\dd r\leq 2R^2 K^2.
\ee
Applying (\ref{3.11}) in (\ref{2.3b}), we find
\be \label{3.13}
I_\beta(u)\geq\frac12 \left(K- R^2 K^2\right)\equiv f(K).
\ee
However, the maximum of the function $f$  in (\ref{3.13}) is attained at $K_0=\frac1{2R^2}$ which gives us the value
$f(K_0)=\frac1{8R^2}$. So, in conclusion, we have the lower bound
\be 
I_\beta(u)\geq\frac1{8R^2},\quad \|u\|^2_H=\frac1{2R^2},
\ee
which establishes (\ref{3.10}).
\end{proof}

\begin{lemma}\label{lemma3.3} 
For any constant $K>0$, there is an element $v\in H$ satisfying $\|v\|_H^2>K$ and $I_\beta(v)<0$.
\end{lemma}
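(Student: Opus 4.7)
The plan is to exhibit a ray $v_t = tu_0$ in $H$ along which $I_\beta$ tends to $-\infty$ and the norm blows up. Pick any fixed nonzero $u_0 \in H$; a concrete convenient choice is the piecewise-linear tent function defined in (2.28) (or any smoothing of it), which is absolutely continuous on $[0,R]$, vanishes at the two endpoints, and has finite $H$-norm by the explicit computations (2.28b)--(2.28c). In particular $\int_0^R r u_0^4\,\dd r > 0$ since $u_0 \not\equiv 0$.

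For $t > 0$, set $v_t = t u_0 \in H$. Because the nonlinearity $\frac{r}{2}u^4$ is quartic while the remaining terms are quadratic, a direct substitution gives
\[
I_\beta(v_t) \;=\; \frac{t^2}{2}\,A(u_0) \;-\; \frac{t^4}{4}\,B(u_0),
\]
where $A(u_0) = \int_0^R \{\,r(u_0')^2 + \tfrac{n^2}{r} u_0^2 + (\beta-V(r))\,r u_0^2\,\}\,\dd r$ is a finite real constant (using $u_0 \in H$ and continuity of $V$), and $B(u_0) = \int_0^R r u_0^4\,\dd r > 0$. Hence $I_\beta(v_t) \to -\infty$ as $t \to \infty$.

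Simultaneously, $\|v_t\|_H^2 = t^2 \|u_0\|_H^2 \to \infty$. Given any $K > 0$, we may therefore choose $t$ large enough that both $\|v_t\|_H^2 > K$ and $I_\beta(v_t) < 0$ hold; then $v := v_t$ is the required element.

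There is no real obstacle in this argument, as it is the standard ``descending ray'' half of the mountain-pass geometry: the superquartic term beats the subquadratic ones along any nontrivial ray. The only point to verify is that the chosen $u_0$ genuinely lies in $H$, which is immediate for the tent function (or for any nonzero smooth compactly supported function in $(0,R)$). Notice that the condition (\ref{3.2}) on $\beta$ is not needed for this lemma: $A(u_0)$ is a finite constant regardless of the sign of $\beta - V$, and the dominant $-\frac{t^4}{4} B(u_0)$ drives $I_\beta(v_t)$ to $-\infty$ for every $\beta \in \bfR$.
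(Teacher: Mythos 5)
Your proof is correct and is essentially the paper's own argument: the paper likewise scales the amplitude of the tent function (\ref{2.28}) (its parameter $b$ plays the role of your $t$), so that the quartic term dominates and $I_\beta(u_0)\to-\infty$ while $\|u_0\|_H^2=4b^2\ln 2\to\infty$. The only point the paper treats with more care is the membership $u_0\in H$, which for the non-$C^1$ tent function requires the explicit quadratic smoothing $u_\vep$ of (\ref{3.15}); your alternative of a nonzero smooth function compactly supported in $(0,R)$ makes that step genuinely immediate, and your observation that the hypothesis (\ref{3.2}) is not needed for this lemma is also correct.
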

\begin{proof}
With $R=2a$, we will see that we
can use the function $u_0$ defined in (\ref{2.28}) as a test function. 

For this purpose, we first show that $u_0\in H$. To see this,
we need to prove that $u_0$ can be obtained in the limit from a sequence of functions in $X$ under the norm of $H$. In fact, for any $0<\vep<a$, we can define
\be \label{3.15}
u_\vep(r)=\left\{\begin{array}{rll}u_0(r),& r\in [0, a-\vep)\cup (a+\vep,2a],\\&\\ Q_\vep(r),&r\in [a-\vep,a+\vep],
\end{array}\right.
\ee
where $Q_\vep(r)$ is taken to be a quadratic function satisfying
\be 
Q_\vep(a\pm\vep)=u_0(a\pm\vep),\quad Q_{\vep}'(a\pm\vep)=u_0'(a\pm\vep).
\ee
Matching these conditions, we find
\be 
Q_\vep(r)=-\frac{b}{2a\vep}(r^2-2ar+[a-\vep]^2),\quad r\in[a-\vep,a+\vep],
\ee
which enjoys the bounds
\be \label{3.18}
\frac ba(a-\vep)\leq Q_\vep(r)\leq \frac ba\left(a-\frac\vep2\right),\quad | Q'_\vep(r)|\leq \frac ba,\quad r\in[a-\vep,
a+\vep].
\ee
Consequently, from (\ref{3.18}) we have
\be 
\lim_{\vep\to0}\int_{a-\vep}^{a+\vep}\left( r [Q'_\vep(r)]^2+\frac1r Q_\vep^2(r)\right)\,\dd r=0.
\ee
Therefore $\{u_\vep\}$ is a Cauchy sequence in $H$ as $\vep\to0$ whose limit is clearly $u_0$ in view of the definition
of $\{u_\vep\}$ given in (\ref{3.15}). This proves $u_0\in H$.

Using the results (\ref{2.28a})--(\ref{2.28d}), we have
\bea 
\|u_0\|^2_H &=& 4b^2\ln2,\label{3.20}\\
I_\beta(u_0)&\leq&\frac12\int_0^R\left\{ r([u_0]_r)^2+\frac{n^2}r u_0^2+(\beta+V^-) r u_0^2-\frac r2u_0^4\right\}\,\dd r\nn\\
&\leq&b^2\left(1+n^2(2\ln 2-1)+\frac13 a^2(\beta+V^-_0)-\frac1{10}a^2b^2\right).\label{3.21}
\eea
From (\ref{3.20}) and (\ref{3.21}), we see that for any $K>0$ we may choose $b>0$ sufficiently large
to get $I_\beta(u_0)<0$ and $\|u_0\|^2_H>K$.

Thus the lemma follows.
\end{proof} 

It is interesting to note that  (\ref{3.21}) implies that $I_\beta(u_0)\to-\infty$ as $b\to\infty$.
In other words, 
the action functional (\ref{2.3b}) is not bounded from below over $H$ which prevents a direct minimization approach
to the problem. Indeed, we are now prepared to obtain a nontrivial solution of the boundary value problem
(\ref{2.1})--(\ref{2.2}) as a saddle point of the functional (\ref{2.3b}) in the following theorem.

\begin{theorem} \label{theorem2} For any $\beta\geq V_0^+$, $|n|\geq1$, and $R>0$, the problem (\ref{2.1})--(\ref{2.2}) has a nontrivial solution
over the interval $[0,R]$.
Moreover, such a solution may be obtained from a min-max approach applied to the action functional (\ref{2.3b}).
\end{theorem}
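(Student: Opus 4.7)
The plan is to invoke the standard mountain-pass theorem of Ambrosetti--Rabinowitz on the Hilbert space $H$, using the three preparatory lemmas already in hand. Note that $I_\beta(0) = 0$, Lemma \ref{lemma3.2} provides a spherical barrier of radius $\|u\|_H^2 = K_0 = \frac{1}{2R^2}$ on which $I_\beta \geq \frac{1}{8R^2} > 0$, and Lemma \ref{lemma3.3} produces an explicit element $v \in H$ with $\|v\|_H^2$ exceeding $K_0$ at which $I_\beta(v) < 0$. Combined with the Palais--Smale condition established in Lemma \ref{lemma3.1}, these ingredients furnish exactly the geometric data required by the mountain-pass theorem.

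I would therefore introduce the class of paths
\[
\Gamma = \{\gamma \in C([0,1], H) \mid \gamma(0) = 0,\ \gamma(1) = v\}
\]
and define the min-max value
\[
c = \inf_{\gamma \in \Gamma} \max_{t \in [0,1]} I_\beta(\gamma(t)).
\]
Since every such path must intersect the sphere $\{\|u\|_H^2 = K_0\}$ (as $\gamma(0) = 0$ lies inside and $\gamma(1) = v$ lies outside), Lemma \ref{lemma3.2} yields $c \geq \frac{1}{8R^2} > 0$. The mountain-pass theorem then delivers a critical point $u \in H$ with $I_\beta(u) = c > 0$, so in particular $u \not\equiv 0$.

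Next I would translate the critical-point condition $I_\beta'(u) = 0$ into the boundary value problem. Testing against arbitrary $v \in H$ gives the weak form
\[
\int_0^R\left\{ r u_r v_r+\frac{n^2}r u v+(\beta-V(r))r u v - r u^3 v\right\}\,\dd r = 0,
\]
which is precisely the weak formulation of (\ref{2.1}). Standard elliptic regularity on any compact subinterval of $(0,R)$ promotes $u$ to a classical solution there. The boundary condition $u(R) = 0$ is built into $H \subset W^{1,2}_0(D_R)$; the condition $u(0) = 0$ follows by repeating the argument from Section 2 (the finiteness of $\int_0^R r^{-1}u^2\,\dd r$ combined with the estimate analogous to (\ref{20}) forces the radial limit at $r = 0$ to be zero).

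The main obstacle I anticipate is of a bookkeeping nature rather than a conceptual one: verifying carefully that the abstract critical point in $H$ genuinely yields a solution of the singular ODE (\ref{2.1}) with the pointwise boundary conditions (\ref{2.2}). The only hypothesis used in a nontrivial way is $\beta \geq V_0^+$, which enters both the Palais--Smale verification in Lemma \ref{lemma3.1} (to coerce the $H$-norm) and implicitly in the mountain-pass geometry; the condition $|n| \geq 1$ ensures the term $\frac{n^2}{r}u^2$ in the norm $\|\cdot\|_H$ matches the quadratic part of $I_\beta$ so that the lower bound in Lemma \ref{lemma3.2} is genuinely controlled by $\|u\|_H$ alone.
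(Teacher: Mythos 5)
Your proposal is correct and follows essentially the same route as the paper: the same mountain-pass construction with the path class $\Gamma$ joining $0$ to the test element from Lemma \ref{lemma3.3}, the barrier from Lemma \ref{lemma3.2}, and the Palais--Smale condition from Lemma \ref{lemma3.1}. The only difference is that you spell out the passage from the abstract critical point to the weak formulation and the pointwise boundary conditions, which the paper leaves implicit; that is a welcome but inessential addition.
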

\begin{proof}
Let $I_\beta$ be the action functional (\ref{2.3b}). Then Lemma \ref{lemma3.1} says that $I_\beta$ satisfies the Palais--Smale
condition. Let $K,C_0>0$ be the constant stated in Lemma \ref{lemma3.2}.
Using Lemma \ref{lemma3.3},  we can find some $u_0\in H$ such that $\|u_0\|_H^2>K$ and $I_\beta(u_0)<0$. Denote by $\Gamma$ the set of all 
continuous paths in $H$ that link
the zero element $0$ of $H$ to $u_0$:
\be 
\Gamma=\left\{g\in C([0,1];H)\,|\, g(0)=0, g(1)=u_0\right\}.
\ee
Therefore there is some point $t_g\in(0,1)$ such that $\|g(t_g)\|^2_H=K$. From the classical mountain-pass
theorem (cf. Evans \cite{Evans}), we know that
\be 
c=\inf_{g\in\Gamma}\max_{t\in[0,1]} I_\beta(g(t))\geq C_0,
\ee
is a critical value of $I_\beta$. In other words, there is an element $u\in H$ satisfying $I_\beta(u)=c$ which is a critical
point of $I_\beta$. Of course, $u$ is nontrivial. That is, $u$ cannot be the zero element of $H$.
\end{proof}

Recall that in Theorem \ref{theorem1}
the condition (\ref{Piii}) is obtained to ensure  the wave propagation constant $\beta$ to assume a negative
value. Although (\ref{Pii}) states a lower estimate for $\beta$, no condition has been obtained to ensure $\beta>0$
for the solution of the constrained minimization problem (\ref{min}). Theorem \ref{theorem2}, however, complements Theorem \ref{theorem1} in that it gives us a family of nontrivial solutions realizing arbitrarily
prescribed parameter $\beta$ in the entire interval $[V_0^+,\infty)$ for any vortex charge $|n|\geq1$ and $R>0$.
That is, our existence result indicates that $\beta$ may take any positive value above or at $V^+_0$.
\section{The defocusing case when $s=-1$}
\setcounter{equation}{0}

If  we have $s=-1$ in (\ref{1.1}) instead, then the action functional (\ref{2.3}) is replaced by
\be \label{3.27}
I(u)=\frac12\int_0^R\left\{ ru^2_r+\frac{n^2}r u^2-rV(r) u^2+\frac r2u^4\right\}\,\dd r,
\ee
the difficulty with the quartic term, which was negative before, disappears, and the constrained minimization
problem (\ref{min}) is easily solved, which gives us a solution to the associated equation
\be 
(ru_r)_r-\frac{n^2}r u+r(V-u^2)u=\beta ru,\label{3.38}
\ee 
for some $\beta\in\bfR$. As before, this equation leads us to the relation
\be \label{3.29}
\int_0^R (\beta-V^+)ru^2\,\dd r=-\int_0^R\left(ru_r^2+\frac{n^2}r u^2+rV^-u^2+ru^4\right)\dd r.
\ee
Thus, if $\beta$ satisfies $\beta\geq V^+_0$, then $u\equiv0$. Therefore, regardless of the value of $R$, the problem
prevents the existence of a nontrivial solution for sufficiently large propagation constant $\beta$. This conclusion
is in sharp contrast to that in the case when $s=+1$ stated in Theorem \ref{theorem2}.

In general, the simple relation (\ref{3.29}) clearly indicates that it is natural for $\beta$ to take negative rather than
positive values. For example, using the Poincar\'{e} inequality over $D_R$,
\be 
\int_0^R r u^2\,\dd r\leq \left(\frac{R}{r_0}\right)^2\int_0^R ru^2_r\,\dd r,
\ee
where $r_0$ ($\approx 2.404825$) is the first positive zero of the Bessel function $J_0$, and (\ref{R22}),
we obtain from (\ref{3.29}) 
 the result
\bea \label{3.31}
\beta\int_0^R ru^2\,\dd r&\leq& V_0^+\int_0^R ru^2\,\dd r-\int_0^R\left(ru_r^2+\frac{n^2}r u^2+rV^-u^2+ru^4\right)\dd r\nn\\
&\leq&-\left(\left[\frac{r_0}R\right]^2+\left[\frac nR\right]^2 -V_0^+\right)\int_0^R ru^2\,\dd r-\int_0^Rru^4\,\dd r.
\eea
Consequently, we obtain
\be 
\beta<-\left(\frac{r_0^2+n^2}{R^2}-V_0^+\right).
\ee
In particular, we have $\beta\to-\infty$ as $|n|\to\infty$. 

An example of the defocusing case $s=-1$ is the study carried out in \cite{K} where $V$ takes the form $V=p J^2_1(br)$
given in terms of the Bessel function $J_1$ and  positive parameters $p,b$. Thus $V^-\equiv 0$. Another example of
the case $s=-1$ is in the lines of the studies \cite{OK,SMF}. There, although $V$ is not radially symmetric, it is non-positive valued,
$V^+\equiv0$. Here, assuming $V$ is radial as well as non-positive, then (\ref{3.29}) indicates that $\beta<0$ is the only possibility.
\medskip 

Note that the ($z$-direction)
angular momentum of the obtained stationary vortex wave in view of (\ref{1.2}) has the simple but
elegant expression \cite{SK}
\be 
L_z=\mbox{Im}\int (\psi^* \pa_\theta\psi) \,r\dd r\dd\theta=2\pi n\int_0^R r u^2\,\dd r=n P.
\ee

Write $\phi(x,y)=u(r)\e^{\ii n\theta}$, where $u$ solves (\ref{1.3}) and satisfies $u(0)=0$, and $\Delta=\nabla^2_\perp$. Then $\phi$ satisfies
\be 
\Delta\phi+(V+s|\phi|^2)\phi=\beta\phi,
\ee
away from the origin of $\bfR^2$. The condition $u(0)=0$ ensures that the origin is a removable singularity
\cite{Ybook} such that when $V$ is an analytic function of $(x,y)$, so is $\phi$. Consequently, in this
situation $u$ vanishes
at $r=0$ like $r^n$ for an $n$-vortex solution as in the classical Ginzburg--Landau equation case \cite{de,JT,Plohr}.

For the focusing case $s=+1$ with a non-positive potential (cf. \cite{YM}), we have $V^+\equiv0$ and  
the statements of our results simplify considerably. For example, for the solution pair $(u,\beta)$ obtained in
Theorem \ref{theorem1} to have the property $\beta<0$, it suffices that the vortex number $n$ satisfies 
the condition
\be 
|n|>\frac{P_0}{2\pi}.
\ee
Moreover, applying Theorem \ref{theorem1} (iv), we see that there is no nontrivial solution satisfying
\be 
P(u)\leq\frac12\quad\mbox{and}\quad n^2>-r^2\beta,\quad r\in[0,R].
\ee
In particular, we conclude that there is no nontrivial solution with $P(u)\leq\frac12$ and $\beta\geq0$.
Besides, in this case Theorem \ref{theorem2} becomes an existence theory for any prescribed propagation
constant $\beta\geq0$.

\small{

}


\begin{thebibliography}{99}

\bibitem{A}
S. K. Adhikari,
Localization of a Bose--Einstein condensate vortex in a bichromatic optical lattice,
{\em Phys. Rev.} A {\bf81} (2010) 043636.
 
\bibitem{AB}
L. Allen, M. W. Beijersbergen, R. J. C. Spreeuw, and J. P. Woerdman, 
Orbital angular momentum of light and the transformation of Laguerre--Gaussian laser modes,
 {\em Phys. Rev.} A {\bf45} (1992) 8185--8189.


\bibitem{BKK}
M. L. M. Balistreri, J. P. Korterik, L. Kuipers, and N. F. van Hulst,
Local observations of phase singularities in optical fields in waveguide structures,
{\em Phys. Rev. Lett.} {\bf85} (2000) 294--297.

\bibitem{BSV}
A. Bekshaev, M. Soskin, and M. Vasnetsov,
Paraxial light beams with angular momentum,
{\em Ukrainian J. Phys.} {\bf 2} (2005) 73--113.

\bibitem{Chiao}
R. Y. Chiao, E. Garmire, and C. H. Townes,
Self-trapping of optical beams, {\em Phys. Rev. Lett.} {\bf13} (1964)
 479--482.

\bibitem{CG}
J. E. Curtis and D. G. Grier,
Structure of optical vortices,
 {\em Phys. Rev. Lett.} {\bf90} (2003) 133901.

\bibitem{DY}
T. A. Davydova and A. I. Yakimenko,
Stable multi-charged localized optical vortices in cubic–quintic nonlinear media,
 {\em J. Optics} A {\bf97} (2004) S197--S201.

\bibitem{de}
H. J. de Vega and F. A. Schaposnik,
Classical vortex solution of the Abelian Higgs model, {\em Phys. Rev.} D
{\bf14} (1976) 1100--1106.

\bibitem{DK}
M. R. Dennis, R. P. King, B. Jack, K. O'Holleran, and M. J. Padgett,
Isolated optical vortex knots,
{\em Nature Phys.} {\bf6} (2010) 118--121.

\bibitem{DKT}
A. S. Desyatnikov, Y. S. Kivshar, and L. Torner, 
Optical vortices and vortex solitons,  {\em Progress in Optics} {\bf47} (2005) 291--391.

\bibitem{Du}
Z. Dutton and J. Ruostekoski,
Transfer and storage of vortex states in light and matter waves,
{\em Phys. Rev. Lett.} {\bf93} (2004) 193602.
 
\bibitem{Evans}
L. C. Evans, {\em Partial Differential Equations}, Amer. Math. Soc., Providence, 2002.

\bibitem{GT}
D. Gilbarg and N. Trudinger, {\em Elliptic Partial Differential Equations of
Second Order}, Springer, Berlin and New York, 1977.

\bibitem{JT}
A. Jaffe and C. H. Taubes, {\em Vortices and Monopoles}, Birkh\"{a}user, Boston, 1980.

\bibitem{KK}
A. M. Kamchatnov and
    S. V. Korneev,
Dynamics of ring dark solitons in Bose-–Einstein condensates and nonlinear optics,
{\em Phys. Lett.} A {\bf374} (2010) 4625--4628.

\bibitem{KMT}
Y. V. Kartashov,
B. A. Malomed,
and L. Torner, Solitons in nonlinear lattices,
 {\em Rev. Mod. Phys.} {\bf83} (2011) 247--305.


\bibitem{KVT}
Y. V. Kartashov, V. A. Vysloukh, and Lluis Torner,
Rotary solitons in Bessel optical lattices,
{\em Phys. Rev. Lett.} {\bf93} (2004)
093904. 

\bibitem{K}
Y. V. Kartashov, V. A. Vysloukh, and L. Torner,
Stable ring vortex solitons in Bessel optical lattices,
{\em Phys. Rev. Lett.} {\bf94} (2005) 043902. 

\bibitem{LD}
J. Leach, M. R. Dennis, J. Courtial, and M. J. Padgett,
Laser beams:  knotted threads of darkness,
{\em Nature} {\bf432} (2004) 165.

\bibitem{MSZ}
A. V. Mamaev,  M. Saffman, and
A. A. Zozulya,
Propagation of dark stripe beams in nonlinear media: snake instability and
creation of optical vortices, {\em Phys. Rev. Lett.} {\bf76} (1996) 2262--2265.

\bibitem{NAO}
D. Neshev, T. J. Alexander, E. A. Ostrovskaya, Y. S. Kivshar, H. Martin, I. Makasyuk, and Z. Chen,
Observation of discrete vortex solitons in optically-induced photonic lattices,
{\em Phys. Rev. Lett.} {\bf92} (2004) 123903.

\bibitem{NNK}
D. Neshev, A. Nepomnyashchy, and Yu. S. Kivshar,
 Nonlinear Aharonov--Bohm scattering by optical vortices,
{\em Phys. Rev. Lett.} {\bf87} (2001) 043901.

\bibitem{NB}
 J. F. Nye and M. V. Berry,  Dislocations in wave trains, {\em Proc. Roy. Soc.}  A {\bf336} (1974) 165--190.

\bibitem{OK}
E. A. Ostrovskaya and Y. S. Kivshar, 
Matter-wave gap vortices in optical lattices,
 {\em Phys. Rev. Lett.} {\bf93} (2004) 160405.


\bibitem{Plohr}
B. J. Plohr, The existence, regularity, and behavior of isotropic solutions
of classical gauge field theories, Thesis, Princeton University,
1980.

\bibitem{RLS}
D. Rozas, C. T. Law, and G. A. Swartzlander, Jr., Propagation dynamics of optical vortices, {\em J. Optical Soc. Amer.}
B {\bf14} (1997) 3054--3065.

\bibitem{RSS}
D. Rozas, Z. S. Sacks, and G. A. Swartzlander, Jr.,
Experimental observation of fluid-like motion of optical vortices, {\em Phys. Rev. Lett.}
{\bf79} (1997) 3399--3402.

\bibitem{SK}
J. R. Salgueiro and Y. S. Kivshar, Switching with vortex beams in nonlinear concentric couplers,
{\em Opt. Exp.} {\bf20} (2007) 12916--12921.

\bibitem{SO}
J. Scheuer and M. Orenstein,  Optical vortices crystals: spontaneous generation in nonlinear semiconductor
microcavities, {\em Science} {\bf285} (1999) 230--233.

\bibitem{SMF}
R. G. Scott, A. M. Martin, T. M. Fromhold, S. Bujkiewicz, F.W. Sheard, and M. Leadbeater,
Creation of solitons and vortices by Bragg reflection of Bose--Einstein condensates in an optical lattice,
{\em Phys. Rev. Lett.} {\bf90} (2003) 110404. 

\bibitem{SGV}
M. S. Soskin, V. N. Gorshkov, and M. V. Vasnetsov,
Topological charge and angular momentum of light beams carrying optical vortices,
{\em Phys. Rev.} A {\bf56} (1997) 4064--4075.

\bibitem{SL}
G. A. Swartzlander, Jr. and C. T. Law,
Optical vortex solitons observed in Kerr nonlinear media,
{\em Phys. Rev. Lett.} {\bf69} (1992)
2503--2506. 

\bibitem{VB}
A. Vinçotte and L. Berge,
Femtosecond optical vortices in air,
 {\em Phys. Rev. Lett.} {\bf95} (2005) 193901.

\bibitem{YM}
J. Yang and Z. H. Musslimani,
Fundamental and vortex solitons in a two-dimensional optical lattice,
{\em Optics Lett.} {\bf28} (2003) 2094--2096.

\bibitem{Ybook}
Y. Yang, {\em Solitons in Field Theory and Nonlinear Analysis}, Springer-Verlag, New York, 2001.

\end{thebibliography}
\end{document}